\definecolor{bur}{rgb}{0.5, 0.0, 0.13}
\newcommand\etal{\emph{et al.~}}
\newtheorem{lemma}{Lemma}
\DeclareMathOperator{\arcsinh}{arcsinh}
\begin{document}

\title{Necessary conditions for steerability of two qubits, from consideration of local operations}

\author{Travis J. Baker}
\email{travis.baker@griffithuni.edu.au}
\author{Howard M. Wiseman}
\affiliation{%
Centre for Quantum Dynamics, Griffith University, Brisbane, 4111 Australia \\
Centre for Quantum Computation and Communication Technology (CQC2T)
}%

\date{\today}

\begin{abstract}
EPR-steering refers to the ability of one observer to convince a distant observer that they share entanglement by making local measurements. 
Determining which states allow a demonstration of EPR-steering remains an open problem in general. 
Here, we outline and demonstrate a method of analytically constructing new classes of two-qubit states which are non-steerable by arbitrary projective measurements, from consideration of local operations performed by the steering party on states known to be non-steerable.
\end{abstract}

\maketitle

\section{\label{sec:intro}Introduction}

Steering is a remarkable feature of quantum mechanics named by Schr\"odinger in 1935 \cite{Sch35}, whereby one party (Alice) can influence the outcomes of a distant party (Bob) by performing local measurements on a shared state. 
The coining of this term was motivated by the famous EPR paper \cite{EPR35} on the nonlocality of entangled states earlier in the same year. 
More recently, EPR-steering was formalised by one of us and coworkers \cite{Wis07}.
As a consequence, it is known that the set of correlations obtainable from steering is distinct from entanglement and Bell nonlocality, as illustrated in Fig.~\ref{fig:convex_sets}.
The question of which states admit a demonstration of EPR-steering remains an open question even for the simplest case of two qubits.

\begin{figure}[htbp]
\centering
	\includegraphics[width=\linewidth]{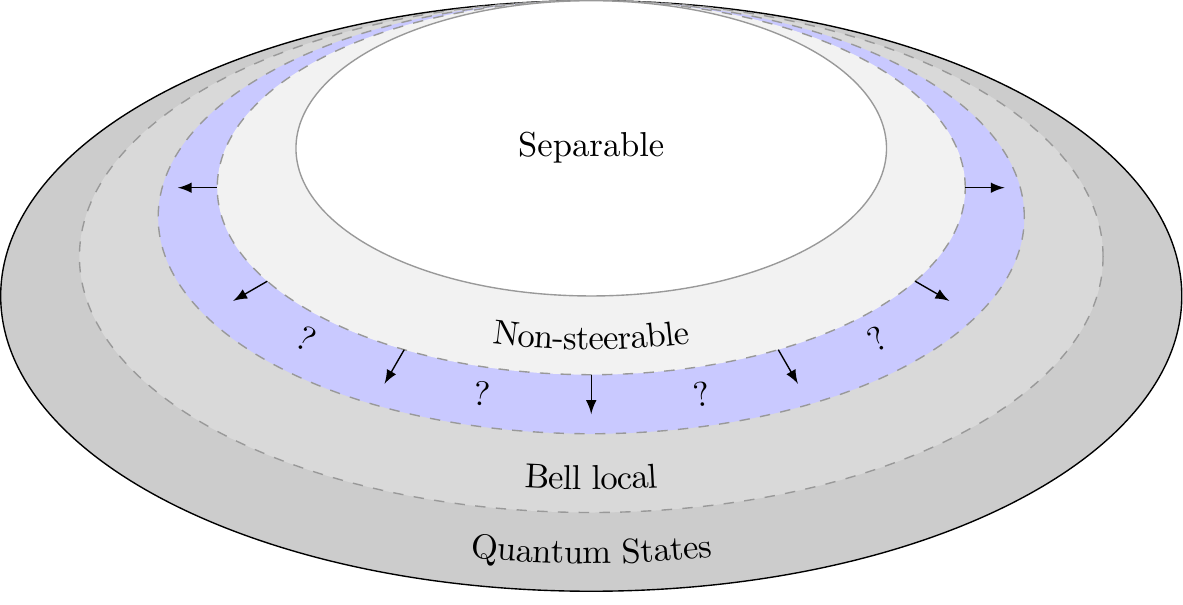}
	\caption{Heirarchy of quantum correlations. The motivation of this paper is to investigate the boundary of non-steerability for two qubit states, when the steering party can make projective measurements.}
	\label{fig:convex_sets}
\end{figure}

EPR-steering (or just ``steering'', for short) has found application in one-sided device-independent quantum key distribution \cite{Bra11}, wherein a non-zero key rate is possible only if the states used by Alice and Bob can be used to demonstrate steering.
As such, a characterization of the set of states which are able to demonstrate steering could be of great use for future quantum technologies.
Moreover, the question of proving {\em non-steerability} with respect to a given set of measurements is important in studying the phenomena of one-way steering \cite{Han12, Eva14a, Bow14, Skr14, Wol16, Tis18}, hidden steering \cite{Qui15}, and joint measurability \cite{Qui14, Uol14, Uol15}.

In this paper, we consider a steering scenario in which Alice and Bob share a pair of qubits, and are permitted to perform local projective measurements.
Although the set of two-qubit states which are steerable has not been completely characterized, partial results have been derived \cite{Wis07, Jon07, Cav09, Jev15, Ngu16a, Bow16, Cav16b, Mil18}. 
To find necessary conditions for steerability is equivalent to finding sufficient conditions for when Alice provably {\em cannot} steer Bob.
For this, one must construct a so-called Local-Hidden-State (LHS) model for Bob \cite{Wis07}; that is, show there exists a semi-classical model in which the correlations between the two parties can be described by classical post-processing of local quantum states on Bob's side.
On the other hand, a plethora of steering inequalities \cite{Cav09} have been developed, the violation of which are sufficient to demonstrate steering.

In their seminal paper, Wiseman \etal \cite{Wis07} derived necessary and sufficient conditions for the steerability of the highly symmetric two-qudit Werner state of arbitrary dimension $d$ \cite{Wer89}.
Later, Jevtic \etal \cite{Jev15} constructed LHS models for the more general class of $T$-states---two-qubit states with maximally mixed marginals---and conjectured the derived condition was precisely the border between steerable and non-steerable states.
This was later shown analytically \cite{Ngu16a}.
Beyond $T$-states, analytic necessary and sufficient conditions for two-qubit steerability are not known.
Numerical algorithms to construct necessary conditions based on semi-definite programming (SDP) techniques have been developed \cite{Hir16, Cav16a, Fil18}, where the SDP size increases exponentially in the number measurements performed by the steering party.
Moreover, an efficient numerical method to calculate (up to a desired precision) the exact border of non-steerability  for two qubits,  for infinite projective measurements made by the steering party was recently found \cite{Ngu19}. 

In this paper, we present a new method which allows the derivation of new analytic conditions for non-steerability.
Explicated here for the case of two-qubits, the idea involves allowing the untrusted party to apply local completely-positive trace-preserving (CPTP) operations, together with previous results on  $T$-state  non-steerability mentioned above, to construct two-qubit states which are provably non-steerable.
We compare the results with a  simpler  sufficient condition guaranteeing two-qubit non-steerability that was obtained by Bowles \etal \cite{Bow16}.
In their analysis, the authors noted an interesting open question; are there non-steerable states which cannot be written as convex combinations of non-steerable states satisfying their criterion, and separable states? 
The analytic results we present here answer this question in the affirmative.
We give examples to demonstrate this claim.
Moreover, we expect our method to be a useful recipe for constructing LHS models in higher dimensional systems. 

The remainder of this paper is written in the following order.
In Section \ref{sec:twoqubits} we discuss the theory of steering.
In Section \ref{sec:cptp} we briefly introduce the concept of completely-positive maps in quantum information theory, and introduce a useful characterization of such maps for qubits.
In Section \ref{sec:steeringmaps}, ideas from the previous two sections are synthesized to show how classes of two qubit states can be shown to be non-steerable based on local operations.
We give examples, before concluding in Section \ref{sec:conc}.

\section{Steering and Two-Qubit Geometry}
\label{sec:twoqubits}

Consider two distant observers Alice and Bob, who share a bipartite quantum state $\rho \in \mathfrak{B}\left( \mathcal{H}_A \otimes \mathcal{H}_B \right)$. 
Suppose Alice can perform any one of a collection of measurements $\{\mathcal{M}_{x}\}_x$ on her system, where $x$ serves as an index for each measurement. 
To be general, each such measurement is described by a Positive Operator-Valued Measure (POVM), characterized by the set of operators $\{ E_{a | x}\}_a$, with outcomes $a$, where, $\forall x$, $\sum_a E_{a | x} = I$ and $\forall a, E_{a | x}>0$.  
In a steering protocol, Bob requests that Alice make measurement $x$ on her system and announce the result $a$.
The set of Bob's reduced states, conditioned on $a$ and $x$ is called an {\em assemblage} \cite{Pus13}, which is denoted by $\{ \sigma_{a|x}\}_{a,x}$.
Specifically, the assemblage comprises the unnormalised states
\begin{equation}
	\sigma_{a|x} = \Tr_A \left[ E_{a|x} \otimes I \rho \right], \label{eq:2assemblage}
\end{equation}
which contain information regarding Alice's marginal statistics, $p(a|x) = \Tr [ \sigma_{a|x} ]$.
In a practical setting, Bob can reconstruct his assemblage by repeatedly performing tomographically complete measurements.
Bob's task is now to decide whether his assemblage could have arisen from a classical processing of local quantum states.
That is, if there exists an ensemble of {\em local-hidden states} $\{ \sigma_\lambda\}_\lambda$ described by the variable $\lambda$, a probability density $p(\lambda)$, and a non-negative response function $p(a| x,\lambda)$ such that
\begin{equation}
	\sigma_{a|x} = 
	 \int d\lambda p(a| x,\lambda) p(\lambda) \sigma_\lambda \label{eq:lhsmodel}
\end{equation}
holds $\forall a, x$.
If no such decomposition holds, it is said that Alice has succeeded in steering Bob.

In this work, we consider the subset of steering scenarios where Alice and Bob share a two-qubit state ($\dim(\mathcal{H}_A) = \dim(\mathcal{H}_B) = 2$).
In order to state our results in a useful form, we adopt the basis of Pauli operators for operators on $\mathcal{H}_{A(B)}$.
Any shared quantum state can then be expressed as
\begin{equation}
	\rho = \frac{1}{4}\left( I \otimes I + \bm{a} \cdot \bm{\sigma} \otimes I + I \otimes \bm{b} \cdot \bm{\sigma} + \sum\limits_{i,j=1}^3 T_{ij} \sigma_i \otimes \sigma_j \right),
	\label{eq:noncanonicalstate}
\end{equation}
where all information about the state is encoded into Alice's and Bob's local Bloch vectors $a_i = \Tr[\rho (\sigma_i \otimes I)]$, $b_i= \Tr[\rho (I \otimes \sigma_i)]$, and the elements of the correlation matrix $T_{ij}=\Tr[\rho (\sigma_i \otimes \sigma_j)]$.
Here, $i,j=1,2,3$.

As in \cite{Bow16}, we note that the problem of two-qubit steering can be simplified by reducing $\rho$ to a canonical form, by performing a local filtering operation on Bob's side under which steerability and non-steerability are invariant \cite{Qui15}. 
This is achieved by the transformation
\begin{equation}
	F_B(\rho) := \frac{\left( I \otimes \rho_B^{-1/2}\right) \rho \left( I \otimes \rho_B^{-1/2} \right)}{\Tr \left( I \otimes \rho_B^{-1/2}\right) \rho \left( I \otimes \rho_B^{-1/2}\right)}
\end{equation}
which leaves Bob's reduced state maximally mixed, $\bm{b}=\bm{0}$ \cite{Jev14}.
Applying this operation to Eq. \eqref{eq:noncanonicalstate} and diagonalizing $T$ by performing local unitary operations \cite{Jev15}, we have the canonical form of the two qubit state
\begin{equation}
	\rho = \frac{1}{4}\left( I \otimes I + \bm{a} \cdot \bm{\sigma} \otimes I + \sum\limits_{i,j=1}^3 t_i \sigma_i \otimes \sigma_i \right),
	\label{eq:canonicalstate}
\end{equation}
where $\{ t_i \}_{i=1}^3$ are the eigenvalues of the diagonal $T$ matrix.
Notice that the problem of determining two-qubit steerability now involves a set of only six parameters.
In this paper, we will refer to a state by its tuple $(\bm{a}, T)$.

$T$-states are the family of states which are solely determined by the matrix $T$, and therefore have $\bm{a}=\bm{0}$.
Recently, it has been shown \cite{Jev15, Ngu16} that the $T$-states which lie on the border of steerability are those which are solutions to
\begin{equation}
	\frac{1}{2\pi} \iint d^2 \bm{\hat{x}} \sqrt{\bm{\hat{x}}^\top T^2 \bm{\hat{x}}}  = 1,
	\label{eq:tstates}
\end{equation}
where the integration is over all unit vectors $\bm{\hat{x}}$ on the Bloch sphere, and $\top$ denotes matrix transpose.
Since $T$-states are equivalent to mixtures of the four Bell states, they can be geometrically represented by a tetrahedron in the $(t_1,t_2,t_3)$ space with the Bell states at the vertices \cite{Hor96}.
Eq.~\eqref{eq:tstates} implicitly defines a convex surface in this space, the closure of which consists of the entire set of non-steerable $T$-states.

Bowles \etal \cite{Bow16} showed, using a LHS model constructed with an ensemble of states which is uniformly distributed on the Bloch sphere,
that a sufficient condition for non-steerability is
\begin{equation}
	\max_{\bm{\hat{x}}} \left[ (\bm{a} \cdot \bm{\hat{x}})^2 + 2\| T\bm{\hat{x}} \| \right] \leq 1,
	\label{eq:bowles}
\end{equation}
where $\| \bullet \|$ is the Euclidean norm.
A simple calculation shows that for the Werner state ($\bm{a}=\bm{0}, T=-\mu I$), this condition is indeed tight, $\mu \leq \frac{1}{2}$ \cite{Wis07}.
However, we show in this paper that there are non-steerable states which cannot be detected by this criterion, nor by taking convex combinations between states that satisfy it and separable states.
Consequently, we are able to provide new analytic boundaries of non-steerability for certain classes of two-qubit states.

\section{Completely Positive Trace-Preserving Maps}
\label{sec:cptp}

In the context of quantum information theory, positive maps map density operators to density operators.
However, quantum systems cannot always be regarded as isolated from an environment.
Local operations must therefore ensure that any operation performed on one part of a composite system preserves the positivity of the density operator describing the entire system.
Therefore, the notion of {\em complete positivity} is required when discussing the allowable operations on quantum states.

An extremely important development in the understanding of completely positive (CP) maps was Choi's theorem \cite{Cho75}.
This theorem says that the map $\Phi: \mathbb{C}^{m\times m} \mapsto \mathbb{C}^{n\times n}$ is CP if and only if it admits a decomposition of the form
\begin{equation}
	\Phi (\rho) = \sum\limits_k A_k \rho A_k^\dagger, \label{eq:krausmap}
\end{equation}
where the $A_k$ are $n\times m$ matrices.
In order to ensure $\Phi$ is trace-preserving (TP), the constraint $\sum_k A_k^\dagger A_k = I$ is required.
The operators $\{A_k\}$ are conventionally referred to as Kraus operators, due to his influential work \cite{Kra71}.

We are principally interested in the set of CPTP maps which can be applied to single qubits.
Since a CPTP map on a single qubit is a linear operator on $\mathbb{C}^{2\times 2}$, King and Ruskai \cite{Kin99} showed that such a map can always be represented by a real $4\times 4$ matrix with six parameters.
Denoting the $2\times 2$ identity matrix by $\sigma_0$, such a map transforms the basis of Pauli matrices as $\Phi( \sigma_\nu) = \sum_\mu \mathcal{M}_{\mu\nu} \sigma_\mu$, for $\mu,\nu=0,...,3$,
where
\begin{equation}
	\mathcal{M}
		:= \begin{pmatrix}
			1 & \bm{0}^\top \\
			\bm{m} & \Lambda \\
		\end{pmatrix}
	 =
		\begin{pmatrix}
			1 & 0 & 0 & 0 \\
			m_1 & \lambda_1 & 0 & 0 \\
			m_2 & 0 & \lambda_2 & 0 \\
			m_3 & 0 & 0 & \lambda_3 \\
		\end{pmatrix}.
		\label{eq:mappingmatrix}
\end{equation}
Here, the zeros in the top row result from trace preservation, and the diagonal form of $\Lambda$ is obtained by performing a modified singular value decomposition to the part of $\mathcal{M}$ which acts on the subspace spanned by matrices with zero trace \cite{Kin99}.
As such, it was shown that one can always decompose an arbitrary CPTP map on a qubit as
\begin{equation}
	\Phi(\rho) = U \left[ \Phi_{\bm{m}, \Lambda} \left( V\rho V^\dagger\right)\right]U^\dagger,
\end{equation}
where $U,V$ are unitaries, and $\Phi_{\bm{m}, \Lambda}(\bullet)$ is the map represented by Eq.~\eqref{eq:mappingmatrix}.
Importantly, $\Phi$ will be CP iff $\Phi_{\bm{m}, \Lambda}$ is.

For reasons that will become clear in section \ref{sec:steeringmaps}, we will require a characterization of {\em extremal} maps on qubits.
An extremal map, denoted by $\Phi^\mathcal{E}$, is one that cannot be written as the convex combination of two other CPTP maps, i.e. $\Phi^\mathcal{E}(\bullet) \neq a\Phi_1(\bullet) + (1-a)\Phi_2(\bullet)$ for any $a\in(0,1)$.
The set of extremal qubit CPTP maps were studied in detail in Ref.~\cite{Rus02}, wherein the authors obtained a useful parametrization.
Crucially, it was shown that a map represented by Eq.~\eqref{eq:mappingmatrix} is extremal if and only if at most one element of $\bm{m}$ (conventionally $m_3$) is non-zero.
Moreover, the closure of the set of extreme points of CPTP maps was found to consist of those represented (up to a permutation of indices) by 
\begin{equation}
	\mathcal{M}^\mathcal{E}_{u,v} =
		\begin{pmatrix}
			1 & 0 & 0 & 0 \\
			0 & \cos u & 0 & 0 \\
			0 & 0 & \cos v & 0 \\
			\sin u\sin v & 0 & 0 & \cos u\cos v \\
		\end{pmatrix},
	\label{eq:extrememapmatrix}
\end{equation}
where each $u \in[0,2\pi), v\in [0,\pi)$ specifies an extremal map $\Phi^\mathcal{E}_{u,v}$. 
In the next section, we will apply maps of this form to the steering party's system in order to show novel results about non-steerability.

\section{Non-steerability by consideration of CPTP Maps}
\label{sec:steeringmaps}

Here, we use the following approach to tackle the problem of proving non-steerability of a two-qubit state from Alice to Bob.
Rather than explicitly constructing a LHS model for a general two-qubit state, we can begin with $T$-states which are on the border of steerability (i.e. are solutions to Eq.~\eqref{eq:tstates}), and consider the entire set of CPTP maps on Alice's qubit which give rise to non-zero $\bm{a}$.
By the following lemma, this set of states will also be non-steerable.

\begin{lemma}
Let $\rho$ be a two-qubit state initially shared between Alice and Bob, for which a LHS model exists for Bob for all projective measurements made by Alice, $\{ \Pi_{a|x} \}_a$.
If Alice performs local operations $\Phi_A$ on her qubit, the final state $(\Phi_A \otimes I)\rho$ will also admit a LHS model for Bob for all projective measurements made by Alice.
\end{lemma}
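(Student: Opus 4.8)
The plan is to exploit the physical intuition that a local channel applied by the steering party can never manufacture steering where none was present: any such channel can be absorbed into Alice's measurement, so steering $(\Phi_A\otimes I)\rho$ is no easier than steering $\rho$. To make this precise I would pull Bob's requested projective measurement back through the channel onto the original state. Writing $\Phi_A$ in Kraus form $\Phi_A(\cdot)=\sum_k A_k(\cdot)A_k^\dagger$ with $\sum_k A_k^\dagger A_k = I$, and using the cyclic property of the partial trace for operators supported on Alice's factor, the assemblage produced by measuring $\{\Pi_{a|x}\}_a$ on the final state is
\begin{equation}
	\sigma'_{a|x}=\Tr_A\left[(\Pi_{a|x}\otimes I)(\Phi_A\otimes I)\rho\right]=\Tr_A\left[(\Phi_A^\dagger(\Pi_{a|x})\otimes I)\,\rho\right],
\end{equation}
where $\Phi_A^\dagger(X)=\sum_k A_k^\dagger X A_k$ is the adjoint (Heisenberg-picture) map.

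First I would verify that $\{\tilde E_{a|x}\}_a:=\{\Phi_A^\dagger(\Pi_{a|x})\}_a$ is a genuine POVM on Alice's qubit: each element is positive because $\Phi_A^\dagger$ is completely positive, and $\sum_a \tilde E_{a|x}=\Phi_A^\dagger(I)=I$ because $\Phi_A$ is trace preserving (equivalently $\Phi_A^\dagger$ is unital). Hence $\sigma'_{a|x}$ is exactly the assemblage Bob would obtain if Alice measured the fixed POVM $\{\tilde E_{a|x}\}_a$ directly on $\rho$. The main obstacle is now visible: $\tilde E_{a|x}$ is in general a non-projective effect, whereas the hypothesis only supplies a LHS model for \emph{projective} measurements, so the model cannot be invoked verbatim.

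I would close this gap using a feature special to qubits. A projective measurement on a qubit has two outcomes, so the pulled-back POVM likewise has two elements $\tilde E_{\pm|x}=\alpha_{\pm}I\pm\bm v_x\cdot\bm\sigma$ with $\alpha_++\alpha_-=1$, and these can be re-expressed along their own Bloch direction $\hat{\bm v}_x=\bm v_x/\|\bm v_x\|$ as a classical mixture of the projectors $\Pi_{\pm|\hat{\bm v}_x}$,
\begin{equation}
	\tilde E_{a|x}=\sum_{a'=\pm} c^{x}_{a a'}\,\Pi_{a'|\hat{\bm v}_x},\qquad c^{x}_{aa'}\ge 0,\ \ \textstyle\sum_{a} c^{x}_{aa'}=1 .
\end{equation}
The crucial point is that the coefficients $c^x_{aa'}$ are nonnegative \emph{precisely because} $\tilde E_{\pm|x}\ge 0$ (the effect condition $\alpha_\pm\ge\|\bm v_x\|$ already guaranteed above), so they define legitimate conditional probabilities; the degenerate case $\bm v_x=\bm 0$ is handled by outcome-independent response functions.

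With this decomposition the result follows by reusing the \emph{same} local-hidden-state ensemble $\{p(\lambda)\sigma_\lambda\}_\lambda$ that certifies non-steerability of $\rho$, together with the post-processed response functions
\begin{equation}
	p'(a\,|\,x,\lambda)=\sum_{a'=\pm} c^{x}_{aa'}\,p\!\left(a'\,|\,\hat{\bm v}_x,\lambda\right),
\end{equation}
which inherit nonnegativity and normalisation from the $c^x_{aa'}$. Substituting the original model $\Tr_A[(\Pi_{a'|\hat{\bm v}_x}\otimes I)\rho]=\int d\lambda\,p(a'|\hat{\bm v}_x,\lambda)p(\lambda)\sigma_\lambda$ and using linearity reproduces $\sigma'_{a|x}=\int d\lambda\, p'(a|x,\lambda)p(\lambda)\sigma_\lambda$, which is a valid LHS model for $(\Phi_A\otimes I)\rho$ under all projective measurements. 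This also makes the mechanism transparent: Alice's channel amounts to a relabelling of her measurement directions followed by classical post-processing of outcomes, neither of which can help her steer Bob.
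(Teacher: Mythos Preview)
Your proof is correct and follows the same route as the paper: pull Alice's projective measurement back through $\Phi_A^\dagger$ to obtain a two-outcome POVM acting on the original state $\rho$, then use that a qubit LHS model for projective measurements automatically covers all two-outcome POVMs. The only difference is that the paper dispatches this last step by citing Lemma~3 of Ref.~\cite{Ngu16}, whereas you supply it explicitly by writing the two effects as a stochastic mixture of projectors along $\hat{\bm v}_x$ and post-processing the response function---so your version is more self-contained but otherwise the same argument.
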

\begin{proof}
If Alice performs a CPTP map on her qubit before making her measurement, Bob's assemblage will be the collection of states
\begin{align}
	\sigma^{\Phi_A}_{a|x} &= \Tr_A \left[ (\Pi_{a|x} \otimes I) \sum_k (A_k \otimes I) \rho (A_k^\dagger \otimes I)\right] \\
	 &= \Tr_A \left[ \sum_k (A_k^\dagger \Pi_{a|x} A_k \otimes I)  \rho \right] \\
	 &\equiv \Tr_A \left[ \left( E^\Phi_{a|x} \otimes I \right) \rho \right],
\end{align}
where $ E^\Phi_{a|x} := A_k^\dagger \Pi_{a|x} A_k$ are POVM elements of a two-outcome POVM.
Now, $\rho$ admits a LHS model from Alice to Bob for all projective measurements by assumption.
Since it is known that steering with projective measurements is equivalent to steering with two-outcome POVMs (see e.g. Lemma 3 of \cite{Ngu16}), it follows trivially that $\rho$ also admits a LHS model for the subset of two-outcome POVMs $\{ E^\Phi_{a|x}\}_{a}$.

\qedhere
\end{proof}

Since we are interested in proving non-steerability for states in a general form $(\bm{a}, T)$, we need to know how a two qubit state \eqref{eq:noncanonicalstate} changes under a map applied by Alice.
We have that
\begin{align}
	(\Phi_A \otimes I)\rho &= \frac{1}{4}\Big( I \otimes I + (\bm{m} + \Lambda\bm{a}) \cdot \bm{\sigma} \otimes I + I \otimes \bm{b} \cdot \bm{\sigma} \nonumber\\
	&\qquad+ \sum\limits_{i,j=1}^3 (\Lambda_{ii}T_{ij} + m_i b_j )\sigma_i \otimes \sigma_j \Big).
\end{align}
Thus, we see that action of $\Phi_{\bm{m}, \Lambda}$ on $\rho$ induces affine transformations of the original state objects $\bm{a}, \bm{b}$ and $T$ according to
\begin{align}
	\bm{a}' &= \bm{m} + \Lambda\bm{a} \label{eq:map1}\\
	\bm{b}' & = \bm{b} \label{eq:map2} \\ 
	T' &= \bm{m} \bm{b}^\top + \Lambda T, \label{eq:map3}
\end{align}
where primes denote the object after the map is applied.
Considering the canonical form of $\rho$ (Eq.~\eqref{eq:canonicalstate}), only the relations $\bm{a}'=\bm{m} + \Lambda \bm{a}$ and $T' = \Lambda T$ are relevant, with $T$ diagonal.

We give three examples of this idea below, all which have a Bloch vector on Alice's side in the $z$-direction, for simplicity.
That is, by starting with a $T$-state which satisfies Eq.~\eqref{eq:tstates}, we construct a boundary set of non-steerable states $(a \bm{\hat{e}_z}, T')$, which satisfy
\begin{equation}
	\frac{1}{2\pi} \iint d^2 \bm{\hat{x}} \sqrt{\bm{\hat{x}}^\top (\Lambda^{-1}_{u,v} T')^2 \bm{\hat{x}}}  = 1,
	\label{eq:mappedtstates}
\end{equation}
with the constraint that $a=\sin u\sin v$. 
Here, $\bm{\hat{e}_z}$ is the unit vector in the $z$-direction.

Since the ``mapped'' surface will be a convex set, only the set of extremal maps is important, since the image of non-extremal maps will consist of states in the interior of this surface.
In other words, we consider the maps represented by Eq.~\eqref{eq:extrememapmatrix} without loss of generality.
Note that since the new surface defines a convex set of non-steerable states, the equation defining its exterior is a \emph{necessary condition} for steerability of two-qubit states.

The following classes of states we will analyze are equivalent to the so-called two-qubit {\em X-states}.
In the form of a canonical state (Eq.~\eqref{eq:canonicalstate}), X-states are those for which $\bm{a}$ is pointing only in the $z$-direction. 
That is, $\bm{a} = a \hat{\bm{e}}_z$. 
This class of states have received considerable attention in the literature due to exhibiting interesting quantum correlations \cite{Ali10, Men14}.

\subsection{Displaced Werner States}

First, we consider the problem of constructing LHS models for the family of two-qubit states which have correlation matrix proportional to the identity operator, $T'=-\xi I$, and $\bm{a}\neq \bm{0}$.
This family is similar to the family of Werner states \cite{Wer89}, without the restriction that Alice's Bloch vector vanishes.
For this reason, we refer to ${\bm a}$ as being \textit{displaced} from the centre of the Bloch sphere.
Note due to the symmetry on $T$, only the magnitude of $\bm{a}$ is important, $a:=|\bm{a}|$.  
Therefore, this family is completely characterized by two parameters, $a$ and $\xi$.

\begin{figure}[htbp]
\centering
	\includegraphics[width=\linewidth]{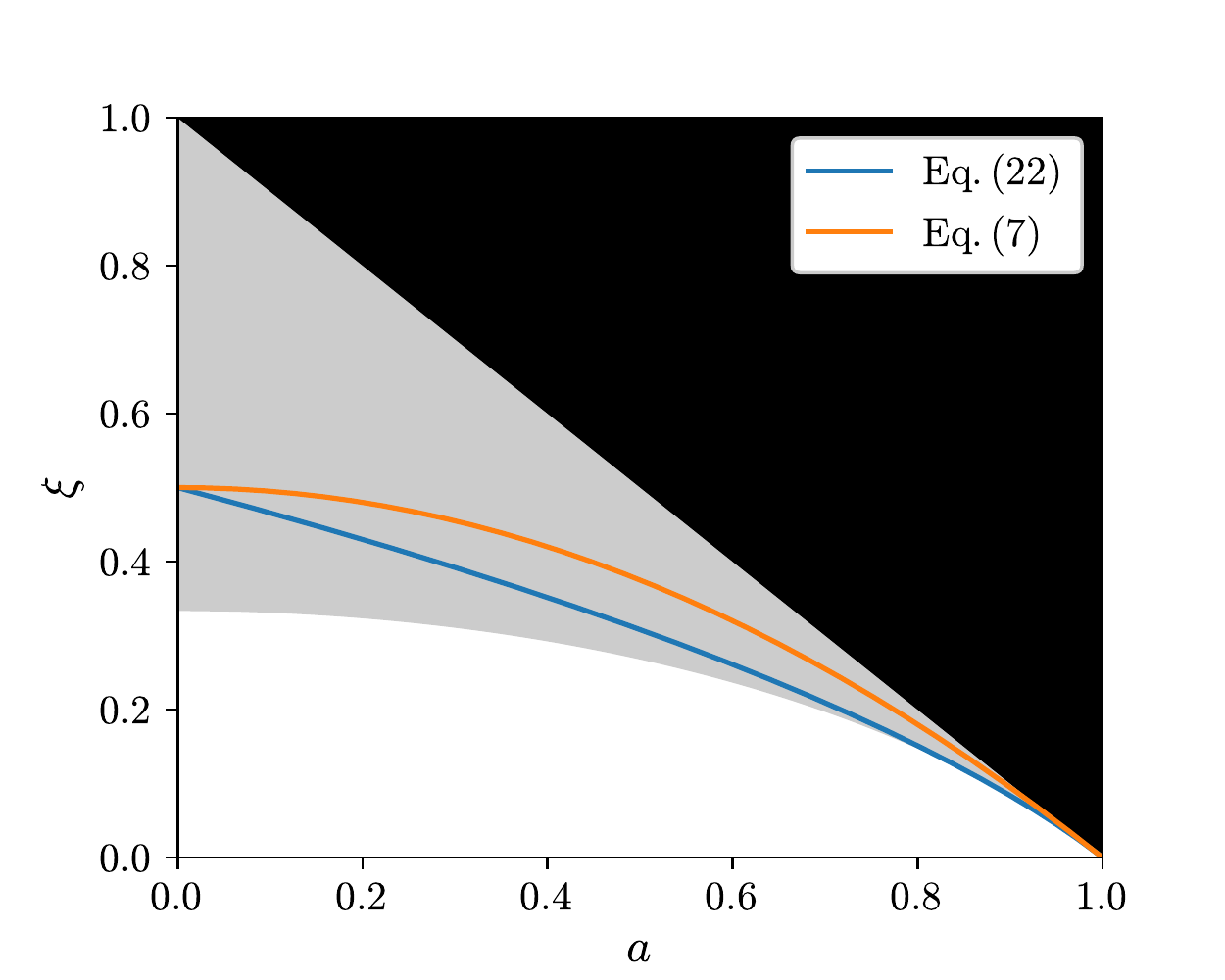}
	\caption{The sections of the state-space of the displaced Werner state for which LHS models are known.
	White region contains separable states, grey region entangled states, and those in the black region are not quantum states.
	Non-steerable regions are below the orange (Eq.~\eqref{eq:bowles}) and blue lines (Eq.~\eqref{eq:dwmaps}).
	The border constructed by performing CPTP maps on Alice's side of non-steerable T-states does not improve upon known results here.}
	\label{fig:modified_werner}
\end{figure}

There are infinitely many ways to achieve a transformation from $T$-states to the family $(a, -\xi I)$ of states, by varying $u,v$.
From Eq.~\eqref{eq:mappedtstates}, values of the correlation parameter $\xi$ on the new boundary of non-steerability will satisfy
\begin{equation}
	\xi \iint d^2\bm{\hat{x}} \sqrt{ \frac{1}{\cos^2{u}}x_1^2 +  \frac{1}{\cos^2{v}}x_2^2 + \frac{1}{\cos^2{u}\cos^2{v}}x_3^2 } = 2\pi \label{eq:xiintegral}
\end{equation}
for a given fixed value of
\begin{equation}
	|a|=\sin u\sin v.
\end{equation}
Numerical optimization shows the maximum value of $\xi$ (and hence the minimum value of the double integral in Eq.~\eqref{eq:xiintegral}) is obtained for the choice $u=v$ for all fixed values of $a$.
From this, it is straightforward to show that the resulting class of mapped displaced Werner states which are non-steerable satisfy
\begin{equation}
	\xi \leq (1-a) \left[ \frac{1}{\sqrt{1-a}} + \sqrt{\frac{1-a}{a}} \arcsinh \sqrt{\frac{a}{1-a}} \right]^{-1}.
	\label{eq:dwmaps}
\end{equation}
This condition, when equality is achieved, is plotted as the blue line in Fig.~\ref{fig:modified_werner}.
This is compared with Eq.~\eqref{eq:bowles}---all states below these lines are non-steerable.
Also shown is the set of entangled (grey), separable (white) and non-quantum (black) states.
It is evident that our method does not improve the known border of non-steerability in this case.
In the next two examples, we break this symmetry to demonstrate the usefulness of our method.

\subsection{Displaced $T$-states\label{sec:2dof}}

Next we show an example where this method improves upon the condition for non-steerability in Eq.~\eqref{eq:bowles}. 
Consider the {\em displaced} $T$-state.
That is, the most general two-qubit state in the canonical form \eqref{eq:canonicalstate}, characterized by a diagonal correlation matrix $T'=\text{diag}[t_1',t_2',t_3']$ and a {\em non-zero} Bloch vector on Alice's side, $\bm{a}\neq \bm{0}$.
Note the direction of $\bm{a}$ is important here; for simplicity, we set $\bm{a}=a \bm{\hat{e}_z}$. 

The surface of non-steerable states in which we are interested is convex in the three dimensional space parametrized by $(t_1',t_2',t_3')$.
Since the correlations---in other words, the non-zero values of $T$---are reduced in magnitude when Alice ``creates'' a non-zero $\bm{a}$ through the map $\Phi^\mathcal{E}_{u,v}$, this surface will lie inside that defined by Eq.~\eqref{eq:tstates}.
We write a vector in this space in the form
\begin{align}
	\begin{pmatrix}t_1' \\ t_2' \\ t_3'\end{pmatrix}= \begin{pmatrix}r\sin\alpha\cos\beta \\ r\sin\alpha\sin\beta \\ r\cos\alpha \end{pmatrix},
	\label{eq:sphericalcoords}
\end{align}
for $\alpha\in[0,\pi], \beta\in[0,2\pi]$.
This allows an efficient calculation of the non-steerable surface for a given $a$, by fixing $\alpha,\beta$ and maximizing $r$, the distance from the origin by computing
\begin{widetext}
\begin{equation}
	r = 2\pi \left[ \min\limits_{u,v} \iint d^2\bm{\hat{x}} \sqrt{ \frac{\sin^2\alpha\cos^2\beta}{\cos^2{u}}x_1^2 +  \frac{\sin^2\alpha\sin^2\beta}{\cos^2{v}}x_2^2 + \frac{\cos^2\alpha}{\cos^2{u}\cos^2{v}}x_3^2 } \,\right]^{-1}.\label{eq:tmapintegral}
\end{equation}
\end{widetext}
Note this minimization is effectively over one variable due to the constraint $a=\sin u\sin v$.
The following two examples show how this method can uncover new regions of non-steerable two qubit states, obtained from conditions involving the state parameters. 
These are illustrated by the blue lines in Fig.~\ref{fig:tfigs}.

We focus on the octant of the $T'$ space for which $t'_i \leq 0 ~\forall~ i$.
First, we analyse the class of displaced $T$-states that have a degenerate eigenvalue of $T$, $t_1'=t_2'$.
That is, the class of $(a \bm{\hat{e}_z}, \text{diag}[t_1',t_1',t_3'])$ states. 
As above, numerical simulations indicate the minimum in Eq.~\eqref{eq:tmapintegral} for all fixed values of $a$ is obtained for $u=v$, as a consequence of the rotational symmetry of the mapped state about the $z$-axis.
From this, using known integrals \cite{Gra14}, we can show that the surface of non-steerability is defined in terms of Eq.~\eqref{eq:sphericalcoords} by
\begin{equation}
	r = \frac{\sqrt{2(1-a)}}{\sin \alpha}\cdot\begin{cases} 
      \left[  \sqrt{1+c} + \frac{1}{\sqrt{c}} \arcsin\sqrt{-c}\right]^{-1}, & c < 0 \\
      \left[ \sqrt{1+c} + \frac{1}{2\sqrt{c}} \ln \frac{\sqrt{1+c} + \sqrt{c}}{\sqrt{1+c} - \sqrt{c}} \right]^{-1}, & c > 0
   \end{cases}
	\label{eq:2tsol}
\end{equation}
where $c := 2\cot \alpha/(1-a) - 1$.
In Fig.~\ref{fig:2DT}, we show this surface for the exemplary case $a = 0.1$.  
Non-steerable states lie to the top-right of this curve in Fig.~\ref{fig:2DT}.
This is compared to Eq.~\eqref{eq:bowles}, which guarantees the states lying to the top-right of the orange line are non-steerable.
Away from the regions where $t_1'=t_2' \approx t_3'$, our method detects new regions of non-steerable states, which are nevertheless entangled. 
Moreover, an even larger class of non-steerable states can be found by forming the convex hull of both non-steerable regions.

In Fig.~\ref{fig:3DT}, we show the same octant of the $T'$ state space for a state with parameters $t_1'=2t_2'$ and $a=0.2$.  
The surface is computed numerically by performing the optimization in Eq.~\eqref{eq:tmapintegral} by varying $\alpha \in [\pi/2,\pi]$.
By breaking the degeneracy in the eigenvalues of $T$, we find the border of non-steerability generated by CPTP maps lies entirely outside that given by Eq.~\eqref{eq:bowles} in this case.
That is, our method provides a boundary,  upon which every state in the class considered gives a positive answer to the open question noted in Ref.~\cite{Bow16}. 

Generally speaking, we have found the method presented here  is most useful  for states where the correlations between the two parties are asymmetric over the Bloch sphere, and the marginal statistics of the steering party are relatively close to maximally mixed ({\em i.e.} $a$ is small).
Finally, we note that our method can be extended to general two-qubit states; those for which the Bloch vector $\bm{a}$ is not in the $z$-direction.
This could be achieved by considering compositions of up to three extremal CPTP maps represented by Eq.~\eqref{eq:extrememapmatrix}, allowing for the permutation of indices.
That is, the composition $\Phi_3^\mathcal{E}(\Phi_2^\mathcal{E}(\Phi_1^\mathcal{E}(\bullet)))$ could be used to map a $T$-state into a nonzero-measure subset of the set of two-qubit states with $\bm{b}=\bm{0}$. 

\begin{figure}[htp]

\subfloat{%
  \includegraphics[clip,width=\columnwidth]{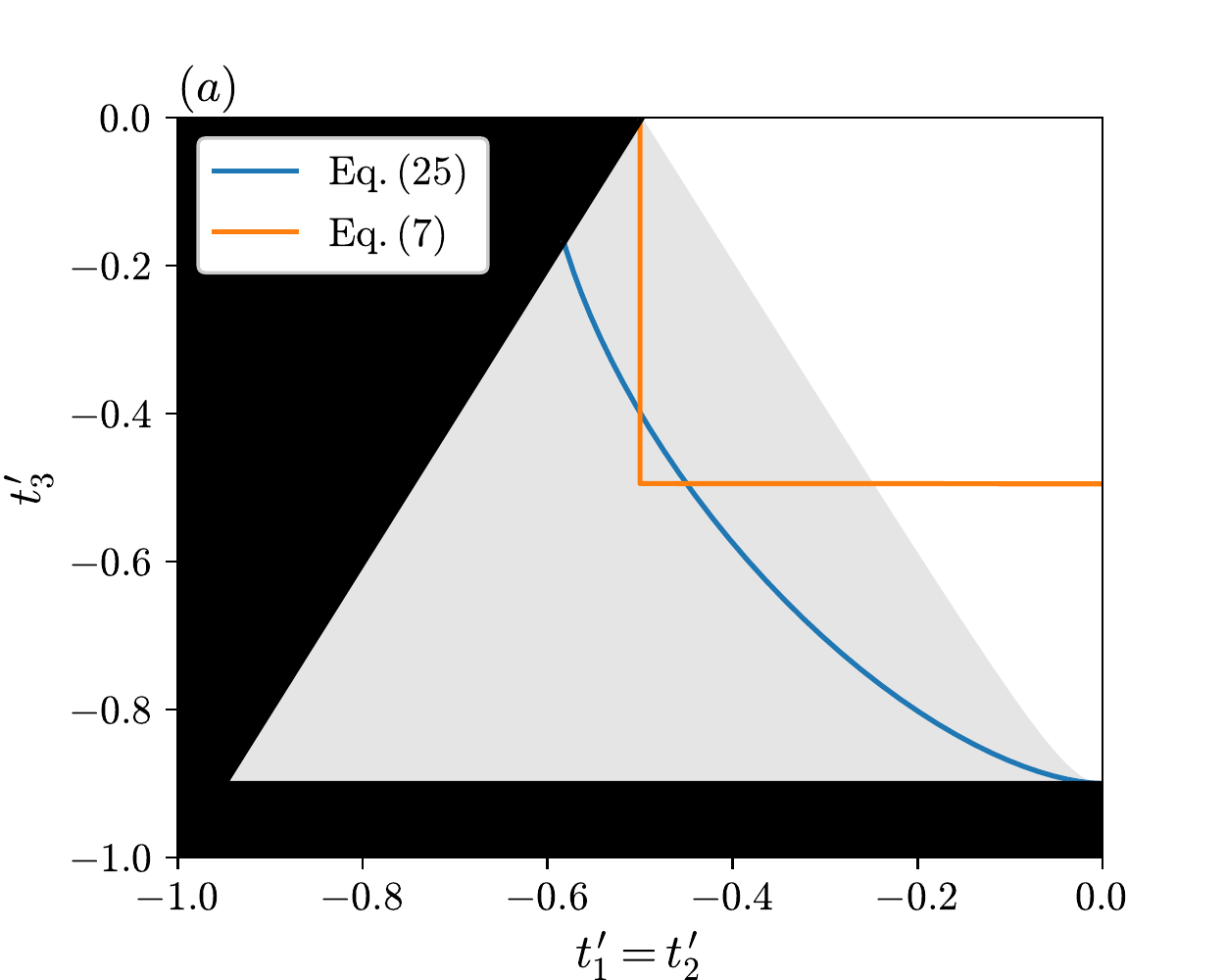}%
  \label{fig:2DT}
}

\subfloat{%
  \includegraphics[clip,width=\columnwidth]{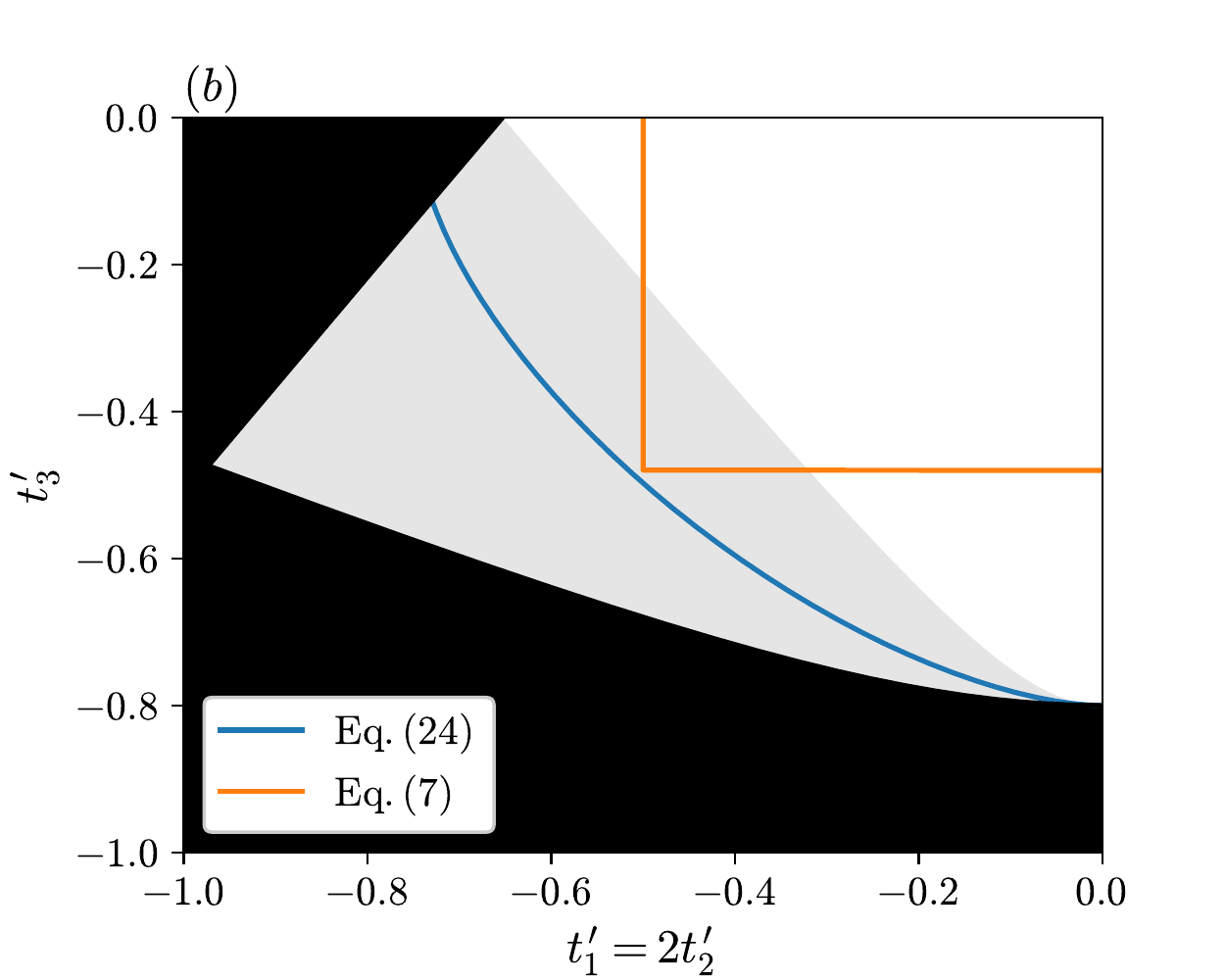}%
  \label{fig:3DT}
}
\caption{Borders of regions guaranteed to be non-steerable for two types of displaced $T$-states in the section of the $T'=\text{diag}[t_1',t_2',t_3']$ space where all $t_i'$s are negative. The white region denotes separable states, the grey region entangled states, and the black region non-quantum states.
Orange curves coincide with equality in Eq.~\eqref{eq:bowles}, blue curves are the convex hull of states obtained by applying extremal maps on Alice's part of a T-state satisfying Eq.~\eqref{eq:tstates}.
States to the top-right of these curves are non-steerable.
(a) Non-steerability borders for two qubit states where $\bm{a}=0.1\bm{\hat{e}_z}$ and $t_1'=t_2'$, given by Eq.~\eqref{eq:2tsol}. 
The methods presented in this paper uncover interesting regions of non-steerable states, away from $t_1' =t_2' \approx t_3'$.
(b) Non-steerability borders for  $(\bm{a}=0.2\bm{\hat{e}_z}, T'=\text{diag}[t_1',t_2'/2,t_3'])$ states.
By applying maps on Alice's qubit prior to steering, we can prove a set of non-steerable states which is a superset of those guaranteed by Eq.~\eqref{eq:bowles}.
}
\label{fig:tfigs}
\end{figure}

\section{Conclusion\label{sec:conc}}

We have shown a new method for proving two-qubit states are non-steerable.
These imply necessary conditions for steerability. 
Beginning with the class of T-states which lie on the border of non-steerability, we have shown how applying a CPTP map to Alice's qubit can uncover  previously unexplored  regions of non-steerable states.
These conditions may provide useful in studying phenomena where proving non-steerability is crucial, such as one-way steering and joint-measurability.
It would be interesting to see if the method presented here can be extended to steering with higher dimensional systems, with both finite and arbitrary projective measurements,  and even POVMs. 

\begin{acknowledgments}

 We would like to thank H.~Chau Nguyen, Otfried G\"uhne and Huy-Viet Nguyen for helpful comments on the preprint,  and Michael Hall for helpful comments on early results included in this manuscript.
T.J.B. acknowledges financial support through an Australian Government Research Training Program Scholarship.
This research was supported by the Australia Research Council Centre of Excellence CE170100012.
\end{acknowledgments}

\bibliography{steering_cptp_maps.bib}{}

\begin{thebibliography}{37}%
\makeatletter
\providecommand \@ifxundefined [1]{%
 \@ifx{#1\undefined}
}%
\providecommand \@ifnum [1]{%
 \ifnum #1\expandafter \@firstoftwo
 \else \expandafter \@secondoftwo
 \fi
}%
\providecommand \@ifx [1]{%
 \ifx #1\expandafter \@firstoftwo
 \else \expandafter \@secondoftwo
 \fi
}%
\providecommand \natexlab [1]{#1}%
\providecommand \enquote  [1]{``#1''}%
\providecommand \bibnamefont  [1]{#1}%
\providecommand \bibfnamefont [1]{#1}%
\providecommand \citenamefont [1]{#1}%
\providecommand \href@noop [0]{\@secondoftwo}%
\providecommand \href [0]{\begingroup \@sanitize@url \@href}%
\providecommand \@href[1]{\@@startlink{#1}\@@href}%
\providecommand \@@href[1]{\endgroup#1\@@endlink}%
\providecommand \@sanitize@url [0]{\catcode `\\12\catcode `\$12\catcode
  `\&12\catcode `\#12\catcode `\^12\catcode `\_12\catcode `\%12\relax}%
\providecommand \@@startlink[1]{}%
\providecommand \@@endlink[0]{}%
\providecommand \url  [0]{\begingroup\@sanitize@url \@url }%
\providecommand \@url [1]{\endgroup\@href {#1}{\urlprefix }}%
\providecommand \urlprefix  [0]{URL }%
\providecommand \Eprint [0]{\href }%
\providecommand \doibase [0]{http://dx.doi.org/}%
\providecommand \selectlanguage [0]{\@gobble}%
\providecommand \bibinfo  [0]{\@secondoftwo}%
\providecommand \bibfield  [0]{\@secondoftwo}%
\providecommand \translation [1]{[#1]}%
\providecommand \BibitemOpen [0]{}%
\providecommand \bibitemStop [0]{}%
\providecommand \bibitemNoStop [0]{.\EOS\space}%
\providecommand \EOS [0]{\spacefactor3000\relax}%
\providecommand \BibitemShut  [1]{\csname bibitem#1\endcsname}%
\let\auto@bib@innerbib\@empty
\bibitem [{\citenamefont {Schr{\"o}dinger}(1935)}]{Sch35}%
  \BibitemOpen
  \bibfield  {author} {\bibinfo {author} {\bibfnamefont {E.}~\bibnamefont
  {Schr{\"o}dinger}},\ }\href {\doibase 10.1017/s0305004100013554} {\bibfield
  {journal} {\bibinfo  {journal} {Math. Proc. Camb. Philos. Soc.}\ }\textbf
  {\bibinfo {volume} {31}},\ \bibinfo {pages} {555} (\bibinfo {year}
  {1935})}\BibitemShut {NoStop}%
\bibitem [{\citenamefont {Einstein}\ \emph {et~al.}(1935)\citenamefont
  {Einstein}, \citenamefont {Podolsky},\ and\ \citenamefont {Rosen}}]{EPR35}%
  \BibitemOpen
  \bibfield  {author} {\bibinfo {author} {\bibfnamefont {A.}~\bibnamefont
  {Einstein}}, \bibinfo {author} {\bibfnamefont {B.}~\bibnamefont {Podolsky}},
  \ and\ \bibinfo {author} {\bibfnamefont {N.}~\bibnamefont {Rosen}},\ }\href
  {\doibase 10.1103/PhysRev.47.777} {\bibfield  {journal} {\bibinfo  {journal}
  {Phys. Rev.}\ }\textbf {\bibinfo {volume} {47}},\ \bibinfo {pages} {777}
  (\bibinfo {year} {1935})}\BibitemShut {NoStop}%
\bibitem [{\citenamefont {Wiseman}\ \emph {et~al.}(2007)\citenamefont
  {Wiseman}, \citenamefont {Jones},\ and\ \citenamefont {Doherty}}]{Wis07}%
  \BibitemOpen
  \bibfield  {author} {\bibinfo {author} {\bibfnamefont {H.~M.}\ \bibnamefont
  {Wiseman}}, \bibinfo {author} {\bibfnamefont {S.~J.}\ \bibnamefont {Jones}},
  \ and\ \bibinfo {author} {\bibfnamefont {A.~C.}\ \bibnamefont {Doherty}},\
  }\href {\doibase 10.1103/PhysRevLett.98.140402} {\bibfield  {journal}
  {\bibinfo  {journal} {Phys. Rev. Lett.}\ }\textbf {\bibinfo {volume} {98}},\
  \bibinfo {pages} {140402} (\bibinfo {year} {2007})}\BibitemShut {NoStop}%
\bibitem [{\citenamefont {Branciard}\ \emph {et~al.}(2012)\citenamefont
  {Branciard}, \citenamefont {Cavalcanti}, \citenamefont {Walborn},
  \citenamefont {S.},\ and\ \citenamefont {Wiseman}}]{Bra11}%
  \BibitemOpen
  \bibfield  {author} {\bibinfo {author} {\bibfnamefont {C.}~\bibnamefont
  {Branciard}}, \bibinfo {author} {\bibfnamefont {E.~G.}\ \bibnamefont
  {Cavalcanti}}, \bibinfo {author} {\bibfnamefont {S.~P.}\ \bibnamefont
  {Walborn}}, \bibinfo {author} {\bibfnamefont {V.}~\bibnamefont {S.}}, \ and\
  \bibinfo {author} {\bibfnamefont {H.~M.}\ \bibnamefont {Wiseman}},\ }\href
  {\doibase 10.1103/PhysRevA.85.010301} {\bibfield  {journal} {\bibinfo
  {journal} {Phys. Rev. A}\ } (\bibinfo {year} {2012}),\
  10.1103/PhysRevA.85.010301},\ \Eprint {http://arxiv.org/abs/arXiv:1109.1435}
  {arXiv:1109.1435} \BibitemShut {NoStop}%
\bibitem [{\citenamefont {H{\"a}ndchen}\ \emph {et~al.}(2012)\citenamefont
  {H{\"a}ndchen}, \citenamefont {Eberle}, \citenamefont {Steinlechner},
  \citenamefont {Samblowski}, \citenamefont {Franz}, \citenamefont {Werner},\
  and\ \citenamefont {Schnabel}}]{Han12}%
  \BibitemOpen
  \bibfield  {author} {\bibinfo {author} {\bibfnamefont {V.}~\bibnamefont
  {H{\"a}ndchen}}, \bibinfo {author} {\bibfnamefont {T.}~\bibnamefont
  {Eberle}}, \bibinfo {author} {\bibfnamefont {S.}~\bibnamefont
  {Steinlechner}}, \bibinfo {author} {\bibfnamefont {A.}~\bibnamefont
  {Samblowski}}, \bibinfo {author} {\bibfnamefont {T.}~\bibnamefont {Franz}},
  \bibinfo {author} {\bibfnamefont {R.~F.}\ \bibnamefont {Werner}}, \ and\
  \bibinfo {author} {\bibfnamefont {R.}~\bibnamefont {Schnabel}},\ }\href@noop
  {} {\bibfield  {journal} {\bibinfo  {journal} {Nature Photonics}\ }\textbf
  {\bibinfo {volume} {6}},\ \bibinfo {pages} {596} (\bibinfo {year}
  {2012})}\BibitemShut {NoStop}%
\bibitem [{\citenamefont {Evans}\ and\ \citenamefont {Wiseman}(2014)}]{Eva14a}%
  \BibitemOpen
  \bibfield  {author} {\bibinfo {author} {\bibfnamefont {D.~A.}\ \bibnamefont
  {Evans}}\ and\ \bibinfo {author} {\bibfnamefont {H.~M.}\ \bibnamefont
  {Wiseman}},\ }\href {\doibase 10.1103/PhysRevA.90.012114} {\bibfield
  {journal} {\bibinfo  {journal} {Phys. Rev. A}\ }\textbf {\bibinfo {volume}
  {90}},\ \bibinfo {pages} {012114} (\bibinfo {year} {2014})}\BibitemShut
  {NoStop}%
\bibitem [{\citenamefont {Bowles}\ \emph {et~al.}(2014)\citenamefont {Bowles},
  \citenamefont {V\'ertesi}, \citenamefont {Quintino},\ and\ \citenamefont
  {Brunner}}]{Bow14}%
  \BibitemOpen
  \bibfield  {author} {\bibinfo {author} {\bibfnamefont {J.}~\bibnamefont
  {Bowles}}, \bibinfo {author} {\bibfnamefont {T.}~\bibnamefont {V\'ertesi}},
  \bibinfo {author} {\bibfnamefont {M.~T.}\ \bibnamefont {Quintino}}, \ and\
  \bibinfo {author} {\bibfnamefont {N.}~\bibnamefont {Brunner}},\ }\href
  {\doibase 10.1103/PhysRevLett.112.200402} {\bibfield  {journal} {\bibinfo
  {journal} {Phys. Rev. Lett.}\ }\textbf {\bibinfo {volume} {112}},\ \bibinfo
  {pages} {200402} (\bibinfo {year} {2014})}\BibitemShut {NoStop}%
\bibitem [{\citenamefont {Skrzypczyk}\ \emph {et~al.}(2014)\citenamefont
  {Skrzypczyk}, \citenamefont {Navascu\'es},\ and\ \citenamefont
  {Cavalcanti}}]{Skr14}%
  \BibitemOpen
  \bibfield  {author} {\bibinfo {author} {\bibfnamefont {P.}~\bibnamefont
  {Skrzypczyk}}, \bibinfo {author} {\bibfnamefont {M.}~\bibnamefont
  {Navascu\'es}}, \ and\ \bibinfo {author} {\bibfnamefont {D.}~\bibnamefont
  {Cavalcanti}},\ }\href {\doibase 10.1103/PhysRevLett.112.180404} {\bibfield
  {journal} {\bibinfo  {journal} {Phys. Rev. Lett.}\ }\textbf {\bibinfo
  {volume} {112}},\ \bibinfo {pages} {180404} (\bibinfo {year}
  {2014})}\BibitemShut {NoStop}%
\bibitem [{\citenamefont {Wollmann}\ \emph {et~al.}(2016)\citenamefont
  {Wollmann}, \citenamefont {Walk}, \citenamefont {Bennet}, \citenamefont
  {Wiseman},\ and\ \citenamefont {Pryde}}]{Wol16}%
  \BibitemOpen
  \bibfield  {author} {\bibinfo {author} {\bibfnamefont {S.}~\bibnamefont
  {Wollmann}}, \bibinfo {author} {\bibfnamefont {N.}~\bibnamefont {Walk}},
  \bibinfo {author} {\bibfnamefont {A.~J.}\ \bibnamefont {Bennet}}, \bibinfo
  {author} {\bibfnamefont {H.~M.}\ \bibnamefont {Wiseman}}, \ and\ \bibinfo
  {author} {\bibfnamefont {G.~J.}\ \bibnamefont {Pryde}},\ }\href {\doibase
  10.1103/PhysRevLett.116.160403} {\bibfield  {journal} {\bibinfo  {journal}
  {Phys. Rev. Lett.}\ }\textbf {\bibinfo {volume} {116}},\ \bibinfo {pages}
  {160403} (\bibinfo {year} {2016})}\BibitemShut {NoStop}%
\bibitem [{\citenamefont {Tischler}\ \emph {et~al.}(2018)\citenamefont
  {Tischler}, \citenamefont {Ghafari}, \citenamefont {Baker}, \citenamefont
  {Slussarenko}, \citenamefont {Patel}, \citenamefont {Weston}, \citenamefont
  {Wollmann}, \citenamefont {Shalm}, \citenamefont {Verma}, \citenamefont
  {Nam}, \citenamefont {Nguyen}, \citenamefont {Wiseman},\ and\ \citenamefont
  {Pryde}}]{Tis18}%
  \BibitemOpen
  \bibfield  {author} {\bibinfo {author} {\bibfnamefont {N.}~\bibnamefont
  {Tischler}}, \bibinfo {author} {\bibfnamefont {F.}~\bibnamefont {Ghafari}},
  \bibinfo {author} {\bibfnamefont {T.~J.}\ \bibnamefont {Baker}}, \bibinfo
  {author} {\bibfnamefont {S.}~\bibnamefont {Slussarenko}}, \bibinfo {author}
  {\bibfnamefont {R.~B.}\ \bibnamefont {Patel}}, \bibinfo {author}
  {\bibfnamefont {M.~M.}\ \bibnamefont {Weston}}, \bibinfo {author}
  {\bibfnamefont {S.}~\bibnamefont {Wollmann}}, \bibinfo {author}
  {\bibfnamefont {L.~K.}\ \bibnamefont {Shalm}}, \bibinfo {author}
  {\bibfnamefont {V.~B.}\ \bibnamefont {Verma}}, \bibinfo {author}
  {\bibfnamefont {S.~W.}\ \bibnamefont {Nam}}, \bibinfo {author} {\bibfnamefont
  {H.~C.}\ \bibnamefont {Nguyen}}, \bibinfo {author} {\bibfnamefont {H.~M.}\
  \bibnamefont {Wiseman}}, \ and\ \bibinfo {author} {\bibfnamefont {G.~J.}\
  \bibnamefont {Pryde}},\ }\href {\doibase 10.1103/PhysRevLett.121.100401}
  {\bibfield  {journal} {\bibinfo  {journal} {Phys. Rev. Lett.}\ }\textbf
  {\bibinfo {volume} {121}},\ \bibinfo {pages} {100401} (\bibinfo {year}
  {2018})}\BibitemShut {NoStop}%
\bibitem [{\citenamefont {Quintino}\ \emph {et~al.}(2015)\citenamefont
  {Quintino}, \citenamefont {V\'ertesi}, \citenamefont {Cavalcanti},
  \citenamefont {Augusiak}, \citenamefont {Demianowicz}, \citenamefont
  {Ac\'{\i}n},\ and\ \citenamefont {Brunner}}]{Qui15}%
  \BibitemOpen
  \bibfield  {author} {\bibinfo {author} {\bibfnamefont {M.~T.}\ \bibnamefont
  {Quintino}}, \bibinfo {author} {\bibfnamefont {T.}~\bibnamefont {V\'ertesi}},
  \bibinfo {author} {\bibfnamefont {D.}~\bibnamefont {Cavalcanti}}, \bibinfo
  {author} {\bibfnamefont {R.}~\bibnamefont {Augusiak}}, \bibinfo {author}
  {\bibfnamefont {M.}~\bibnamefont {Demianowicz}}, \bibinfo {author}
  {\bibfnamefont {A.}~\bibnamefont {Ac\'{\i}n}}, \ and\ \bibinfo {author}
  {\bibfnamefont {N.}~\bibnamefont {Brunner}},\ }\href {\doibase
  10.1103/PhysRevA.92.032107} {\bibfield  {journal} {\bibinfo  {journal} {Phys.
  Rev. A}\ }\textbf {\bibinfo {volume} {92}},\ \bibinfo {pages} {032107}
  (\bibinfo {year} {2015})}\BibitemShut {NoStop}%
\bibitem [{\citenamefont {Quintino}\ \emph {et~al.}(2014)\citenamefont
  {Quintino}, \citenamefont {V\'ertesi},\ and\ \citenamefont
  {Brunner}}]{Qui14}%
  \BibitemOpen
  \bibfield  {author} {\bibinfo {author} {\bibfnamefont {M.~T.}\ \bibnamefont
  {Quintino}}, \bibinfo {author} {\bibfnamefont {T.}~\bibnamefont {V\'ertesi}},
  \ and\ \bibinfo {author} {\bibfnamefont {N.}~\bibnamefont {Brunner}},\ }\href
  {\doibase 10.1103/PhysRevLett.113.160402} {\bibfield  {journal} {\bibinfo
  {journal} {Phys. Rev. Lett.}\ }\textbf {\bibinfo {volume} {113}},\ \bibinfo
  {pages} {160402} (\bibinfo {year} {2014})}\BibitemShut {NoStop}%
\bibitem [{\citenamefont {Uola}\ \emph {et~al.}(2014)\citenamefont {Uola},
  \citenamefont {Moroder},\ and\ \citenamefont {G\"uhne}}]{Uol14}%
  \BibitemOpen
  \bibfield  {author} {\bibinfo {author} {\bibfnamefont {R.}~\bibnamefont
  {Uola}}, \bibinfo {author} {\bibfnamefont {T.}~\bibnamefont {Moroder}}, \
  and\ \bibinfo {author} {\bibfnamefont {O.}~\bibnamefont {G\"uhne}},\ }\href
  {\doibase 10.1103/PhysRevLett.113.160403} {\bibfield  {journal} {\bibinfo
  {journal} {Phys. Rev. Lett.}\ }\textbf {\bibinfo {volume} {113}},\ \bibinfo
  {pages} {160403} (\bibinfo {year} {2014})}\BibitemShut {NoStop}%
\bibitem [{\citenamefont {Uola}\ \emph {et~al.}(2015)\citenamefont {Uola},
  \citenamefont {Budroni}, \citenamefont {G\"uhne},\ and\ \citenamefont
  {Pellonp\"a\"a}}]{Uol15}%
  \BibitemOpen
  \bibfield  {author} {\bibinfo {author} {\bibfnamefont {R.}~\bibnamefont
  {Uola}}, \bibinfo {author} {\bibfnamefont {C.}~\bibnamefont {Budroni}},
  \bibinfo {author} {\bibfnamefont {O.}~\bibnamefont {G\"uhne}}, \ and\
  \bibinfo {author} {\bibfnamefont {J.-P.}\ \bibnamefont {Pellonp\"a\"a}},\
  }\href {\doibase 10.1103/PhysRevLett.115.230402} {\bibfield  {journal}
  {\bibinfo  {journal} {Phys. Rev. Lett.}\ }\textbf {\bibinfo {volume} {115}},\
  \bibinfo {pages} {230402} (\bibinfo {year} {2015})}\BibitemShut {NoStop}%
\bibitem [{\citenamefont {Jones}\ \emph {et~al.}(2007)\citenamefont {Jones},
  \citenamefont {Wiseman},\ and\ \citenamefont {Doherty}}]{Jon07}%
  \BibitemOpen
  \bibfield  {author} {\bibinfo {author} {\bibfnamefont {S.~J.}\ \bibnamefont
  {Jones}}, \bibinfo {author} {\bibfnamefont {H.~M.}\ \bibnamefont {Wiseman}},
  \ and\ \bibinfo {author} {\bibfnamefont {A.~C.}\ \bibnamefont {Doherty}},\
  }\href {\doibase 10.1103/PhysRevA.76.052116} {\bibfield  {journal} {\bibinfo
  {journal} {Phys. Rev. A}\ } (\bibinfo {year} {2007}),\
  10.1103/PhysRevA.76.052116},\ \Eprint {http://arxiv.org/abs/arXiv:0709.0390}
  {arXiv:0709.0390} \BibitemShut {NoStop}%
\bibitem [{\citenamefont {Cavalcanti}\ \emph {et~al.}(2009)\citenamefont
  {Cavalcanti}, \citenamefont {Jones}, \citenamefont {Wiseman},\ and\
  \citenamefont {Reid}}]{Cav09}%
  \BibitemOpen
  \bibfield  {author} {\bibinfo {author} {\bibfnamefont {E.~G.}\ \bibnamefont
  {Cavalcanti}}, \bibinfo {author} {\bibfnamefont {S.~J.}\ \bibnamefont
  {Jones}}, \bibinfo {author} {\bibfnamefont {H.~M.}\ \bibnamefont {Wiseman}},
  \ and\ \bibinfo {author} {\bibfnamefont {M.~D.}\ \bibnamefont {Reid}},\
  }\href {\doibase 10.1103/PhysRevA.80.032112} {\bibfield  {journal} {\bibinfo
  {journal} {Phys. Rev. A}\ }\textbf {\bibinfo {volume} {80}},\ \bibinfo
  {pages} {032112} (\bibinfo {year} {2009})}\BibitemShut {NoStop}%
\bibitem [{\citenamefont {Jevtic}\ \emph {et~al.}(2015)\citenamefont {Jevtic},
  \citenamefont {Hall}, \citenamefont {Anderson}, \citenamefont {Zwierz},\ and\
  \citenamefont {Wiseman}}]{Jev15}%
  \BibitemOpen
  \bibfield  {author} {\bibinfo {author} {\bibfnamefont {S.}~\bibnamefont
  {Jevtic}}, \bibinfo {author} {\bibfnamefont {M.~J.~W.}\ \bibnamefont {Hall}},
  \bibinfo {author} {\bibfnamefont {M.~R.}\ \bibnamefont {Anderson}}, \bibinfo
  {author} {\bibfnamefont {M.}~\bibnamefont {Zwierz}}, \ and\ \bibinfo {author}
  {\bibfnamefont {H.~M.}\ \bibnamefont {Wiseman}},\ }\href {\doibase
  10.1364/josab.32.000a40} {\bibfield  {journal} {\bibinfo  {journal} {Journal
  of the Optical Society of America B}\ }\textbf {\bibinfo {volume} {32}},\
  \bibinfo {pages} {A40} (\bibinfo {year} {2015})}\BibitemShut {NoStop}%
\bibitem [{\citenamefont {Nguyen}\ and\ \citenamefont
  {Vu}(2016{\natexlab{a}})}]{Ngu16a}%
  \BibitemOpen
  \bibfield  {author} {\bibinfo {author} {\bibfnamefont {H.~C.}\ \bibnamefont
  {Nguyen}}\ and\ \bibinfo {author} {\bibfnamefont {T.}~\bibnamefont {Vu}},\
  }\href {\doibase 10.1209/0295-5075/115/10003} {\bibfield  {journal} {\bibinfo
   {journal} {{EPL} (Europhysics Letters)}\ }\textbf {\bibinfo {volume}
  {115}},\ \bibinfo {pages} {10003} (\bibinfo {year}
  {2016}{\natexlab{a}})}\BibitemShut {NoStop}%
\bibitem [{\citenamefont {Bowles}\ \emph {et~al.}(2016)\citenamefont {Bowles},
  \citenamefont {Hirsch}, \citenamefont {Quintino},\ and\ \citenamefont
  {Brunner}}]{Bow16}%
  \BibitemOpen
  \bibfield  {author} {\bibinfo {author} {\bibfnamefont {J.}~\bibnamefont
  {Bowles}}, \bibinfo {author} {\bibfnamefont {F.}~\bibnamefont {Hirsch}},
  \bibinfo {author} {\bibfnamefont {M.~T.}\ \bibnamefont {Quintino}}, \ and\
  \bibinfo {author} {\bibfnamefont {N.}~\bibnamefont {Brunner}},\ }\href
  {\doibase 10.1103/PhysRevA.93.022121} {\bibfield  {journal} {\bibinfo
  {journal} {Phys. Rev. A}\ }\textbf {\bibinfo {volume} {93}},\ \bibinfo
  {pages} {022121} (\bibinfo {year} {2016})}\BibitemShut {NoStop}%
\bibitem [{\citenamefont {Cavalcanti}\ and\ \citenamefont
  {Skrzypczyk}(2016)}]{Cav16b}%
  \BibitemOpen
  \bibfield  {author} {\bibinfo {author} {\bibfnamefont {D.}~\bibnamefont
  {Cavalcanti}}\ and\ \bibinfo {author} {\bibfnamefont {P.}~\bibnamefont
  {Skrzypczyk}},\ }\href {\doibase 10.1088/1361-6633/80/2/024001} {\bibfield
  {journal} {\bibinfo  {journal} {Reports on Progress in Physics}\ }\textbf
  {\bibinfo {volume} {80}},\ \bibinfo {pages} {024001} (\bibinfo {year}
  {2016})}\BibitemShut {NoStop}%
\bibitem [{\citenamefont {Miller}\ \emph {et~al.}(2018)\citenamefont {Miller},
  \citenamefont {Colbeck},\ and\ \citenamefont {Shi}}]{Mil18}%
  \BibitemOpen
  \bibfield  {author} {\bibinfo {author} {\bibfnamefont {C.~A.}\ \bibnamefont
  {Miller}}, \bibinfo {author} {\bibfnamefont {R.}~\bibnamefont {Colbeck}}, \
  and\ \bibinfo {author} {\bibfnamefont {Y.}~\bibnamefont {Shi}},\ }\href
  {\doibase 10.1063/1.5006199} {\bibfield  {journal} {\bibinfo  {journal}
  {Journal of Mathematical Physics}\ }\textbf {\bibinfo {volume} {59}},\
  \bibinfo {pages} {022103} (\bibinfo {year} {2018})}\BibitemShut {NoStop}%
\bibitem [{\citenamefont {Werner}(1989)}]{Wer89}%
  \BibitemOpen
  \bibfield  {author} {\bibinfo {author} {\bibfnamefont {R.~F.}\ \bibnamefont
  {Werner}},\ }\href {\doibase 10.1103/PhysRevA.40.4277} {\bibfield  {journal}
  {\bibinfo  {journal} {Phys. Rev. A}\ }\textbf {\bibinfo {volume} {40}},\
  \bibinfo {pages} {4277} (\bibinfo {year} {1989})}\BibitemShut {NoStop}%
\bibitem [{\citenamefont {Hirsch}\ \emph {et~al.}(2016)\citenamefont {Hirsch},
  \citenamefont {Quintino}, \citenamefont {V\'ertesi}, \citenamefont {Pusey},\
  and\ \citenamefont {Brunner}}]{Hir16}%
  \BibitemOpen
  \bibfield  {author} {\bibinfo {author} {\bibfnamefont {F.}~\bibnamefont
  {Hirsch}}, \bibinfo {author} {\bibfnamefont {M.~T.}\ \bibnamefont
  {Quintino}}, \bibinfo {author} {\bibfnamefont {T.}~\bibnamefont {V\'ertesi}},
  \bibinfo {author} {\bibfnamefont {M.~F.}\ \bibnamefont {Pusey}}, \ and\
  \bibinfo {author} {\bibfnamefont {N.}~\bibnamefont {Brunner}},\ }\href
  {\doibase 10.1103/PhysRevLett.117.190402} {\bibfield  {journal} {\bibinfo
  {journal} {Phys. Rev. Lett.}\ }\textbf {\bibinfo {volume} {117}},\ \bibinfo
  {pages} {190402} (\bibinfo {year} {2016})}\BibitemShut {NoStop}%
\bibitem [{\citenamefont {Cavalcanti}\ \emph {et~al.}(2016)\citenamefont
  {Cavalcanti}, \citenamefont {Guerini}, \citenamefont {Rabelo},\ and\
  \citenamefont {Skrzypczyk}}]{Cav16a}%
  \BibitemOpen
  \bibfield  {author} {\bibinfo {author} {\bibfnamefont {D.}~\bibnamefont
  {Cavalcanti}}, \bibinfo {author} {\bibfnamefont {L.}~\bibnamefont {Guerini}},
  \bibinfo {author} {\bibfnamefont {R.}~\bibnamefont {Rabelo}}, \ and\ \bibinfo
  {author} {\bibfnamefont {P.}~\bibnamefont {Skrzypczyk}},\ }\href {\doibase
  10.1103/PhysRevLett.117.190401} {\bibfield  {journal} {\bibinfo  {journal}
  {Phys. Rev. Lett.}\ }\textbf {\bibinfo {volume} {117}},\ \bibinfo {pages}
  {190401} (\bibinfo {year} {2016})}\BibitemShut {NoStop}%
\bibitem [{\citenamefont {Fillettaz}\ \emph {et~al.}(2018)\citenamefont
  {Fillettaz}, \citenamefont {Hirsch}, \citenamefont {Designolle},\ and\
  \citenamefont {Brunner}}]{Fil18}%
  \BibitemOpen
  \bibfield  {author} {\bibinfo {author} {\bibfnamefont {M.}~\bibnamefont
  {Fillettaz}}, \bibinfo {author} {\bibfnamefont {F.}~\bibnamefont {Hirsch}},
  \bibinfo {author} {\bibfnamefont {S.}~\bibnamefont {Designolle}}, \ and\
  \bibinfo {author} {\bibfnamefont {N.}~\bibnamefont {Brunner}},\ }\href
  {\doibase 10.1103/PhysRevA.98.022115} {\bibfield  {journal} {\bibinfo
  {journal} {Phys. Rev. A}\ }\textbf {\bibinfo {volume} {98}},\ \bibinfo
  {pages} {022115} (\bibinfo {year} {2018})}\BibitemShut {NoStop}%
\bibitem [{\citenamefont {Nguyen}\ \emph {et~al.}(2019)\citenamefont {Nguyen},
  \citenamefont {Nguyen},\ and\ \citenamefont {G\"uhne}}]{Ngu19}%
  \BibitemOpen
  \bibfield  {author} {\bibinfo {author} {\bibfnamefont {H.~C.}\ \bibnamefont
  {Nguyen}}, \bibinfo {author} {\bibfnamefont {H.-V.}\ \bibnamefont {Nguyen}},
  \ and\ \bibinfo {author} {\bibfnamefont {O.}~\bibnamefont {G\"uhne}},\ }\href
  {\doibase 10.1103/PhysRevLett.122.240401} {\bibfield  {journal} {\bibinfo
  {journal} {Phys. Rev. Lett.}\ }\textbf {\bibinfo {volume} {122}},\ \bibinfo
  {pages} {240401} (\bibinfo {year} {2019})}\BibitemShut {NoStop}%
\bibitem [{\citenamefont {Pusey}(2013)}]{Pus13}%
  \BibitemOpen
  \bibfield  {author} {\bibinfo {author} {\bibfnamefont {M.~F.}\ \bibnamefont
  {Pusey}},\ }\href {\doibase 10.1103/PhysRevA.88.032313} {\bibfield  {journal}
  {\bibinfo  {journal} {Phys. Rev. A}\ }\textbf {\bibinfo {volume} {88}},\
  \bibinfo {pages} {032313} (\bibinfo {year} {2013})}\BibitemShut {NoStop}%
\bibitem [{\citenamefont {Jevtic}\ \emph {et~al.}(2014)\citenamefont {Jevtic},
  \citenamefont {Pusey}, \citenamefont {Jennings},\ and\ \citenamefont
  {Rudolph}}]{Jev14}%
  \BibitemOpen
  \bibfield  {author} {\bibinfo {author} {\bibfnamefont {S.}~\bibnamefont
  {Jevtic}}, \bibinfo {author} {\bibfnamefont {M.}~\bibnamefont {Pusey}},
  \bibinfo {author} {\bibfnamefont {D.}~\bibnamefont {Jennings}}, \ and\
  \bibinfo {author} {\bibfnamefont {T.}~\bibnamefont {Rudolph}},\ }\href
  {\doibase 10.1103/PhysRevLett.113.020402} {\bibfield  {journal} {\bibinfo
  {journal} {Phys. Rev. Lett.}\ }\textbf {\bibinfo {volume} {113}},\ \bibinfo
  {pages} {020402} (\bibinfo {year} {2014})}\BibitemShut {NoStop}%
\bibitem [{\citenamefont {Nguyen}\ and\ \citenamefont
  {Vu}(2016{\natexlab{b}})}]{Ngu16}%
  \BibitemOpen
  \bibfield  {author} {\bibinfo {author} {\bibfnamefont {H.~C.}\ \bibnamefont
  {Nguyen}}\ and\ \bibinfo {author} {\bibfnamefont {T.}~\bibnamefont {Vu}},\
  }\href@noop {} {\bibfield  {journal} {\bibinfo  {journal} {Physical Review
  A}\ }\textbf {\bibinfo {volume} {94}},\ \bibinfo {pages} {012114} (\bibinfo
  {year} {2016}{\natexlab{b}})}\BibitemShut {NoStop}%
\bibitem [{\citenamefont {Horodecki}\ and\ \citenamefont
  {Horodecki}(1996)}]{Hor96}%
  \BibitemOpen
  \bibfield  {author} {\bibinfo {author} {\bibfnamefont {R.}~\bibnamefont
  {Horodecki}}\ and\ \bibinfo {author} {\bibfnamefont {M.}~\bibnamefont
  {Horodecki}},\ }\href {\doibase 10.1103/PhysRevA.54.1838} {\bibfield
  {journal} {\bibinfo  {journal} {Phys. Rev. A}\ }\textbf {\bibinfo {volume}
  {54}},\ \bibinfo {pages} {1838} (\bibinfo {year} {1996})}\BibitemShut
  {NoStop}%
\bibitem [{\citenamefont {Choi}(1975)}]{Cho75}%
  \BibitemOpen
  \bibfield  {author} {\bibinfo {author} {\bibfnamefont {M.-D.}\ \bibnamefont
  {Choi}},\ }\href {\doibase https://doi.org/10.1016/0024-3795(75)90075-0}
  {\bibfield  {journal} {\bibinfo  {journal} {Linear Algebra and its
  Applications}\ }\textbf {\bibinfo {volume} {10}},\ \bibinfo {pages} {285 }
  (\bibinfo {year} {1975})}\BibitemShut {NoStop}%
\bibitem [{\citenamefont {Kraus}(1971)}]{Kra71}%
  \BibitemOpen
  \bibfield  {author} {\bibinfo {author} {\bibfnamefont {K.}~\bibnamefont
  {Kraus}},\ }\href {\doibase https://doi.org/10.1016/0003-4916(71)90108-4}
  {\bibfield  {journal} {\bibinfo  {journal} {Annals of Physics}\ }\textbf
  {\bibinfo {volume} {64}},\ \bibinfo {pages} {311 } (\bibinfo {year}
  {1971})}\BibitemShut {NoStop}%
\bibitem [{\citenamefont {King}\ and\ \citenamefont {Ruskai}(1999)}]{Kin99}%
  \BibitemOpen
  \bibfield  {author} {\bibinfo {author} {\bibfnamefont {C.}~\bibnamefont
  {King}}\ and\ \bibinfo {author} {\bibfnamefont {M.~B.}\ \bibnamefont
  {Ruskai}},\ }\href {http://arxiv.org/abs/quant-ph/9911079} {\bibfield
  {journal} {\bibinfo  {journal} {IEEE Trans. Info. Theory}\ } (\bibinfo {year}
  {1999})},\ \Eprint {http://arxiv.org/abs/arXiv:quant-ph/9911079}
  {arXiv:quant-ph/9911079} \BibitemShut {NoStop}%
\bibitem [{\citenamefont {Ruskai}\ \emph {et~al.}(2002)\citenamefont {Ruskai},
  \citenamefont {Szarek},\ and\ \citenamefont {Werner}}]{Rus02}%
  \BibitemOpen
  \bibfield  {author} {\bibinfo {author} {\bibfnamefont {M.~B.}\ \bibnamefont
  {Ruskai}}, \bibinfo {author} {\bibfnamefont {S.}~\bibnamefont {Szarek}}, \
  and\ \bibinfo {author} {\bibfnamefont {E.}~\bibnamefont {Werner}},\ }\href
  {\doibase https://doi.org/10.1016/S0024-3795(01)00547-X} {\bibfield
  {journal} {\bibinfo  {journal} {Linear Algebra and its Applications}\
  }\textbf {\bibinfo {volume} {347}},\ \bibinfo {pages} {159 } (\bibinfo {year}
  {2002})}\BibitemShut {NoStop}%
\bibitem [{\citenamefont {Ali}\ \emph {et~al.}(2010)\citenamefont {Ali},
  \citenamefont {Rau},\ and\ \citenamefont {Alber}}]{Ali10}%
  \BibitemOpen
  \bibfield  {author} {\bibinfo {author} {\bibfnamefont {M.}~\bibnamefont
  {Ali}}, \bibinfo {author} {\bibfnamefont {A.}~\bibnamefont {Rau}}, \ and\
  \bibinfo {author} {\bibfnamefont {G.}~\bibnamefont {Alber}},\ }\href@noop {}
  {\bibfield  {journal} {\bibinfo  {journal} {Physical Review A}\ }\textbf
  {\bibinfo {volume} {81}},\ \bibinfo {pages} {042105} (\bibinfo {year}
  {2010})}\BibitemShut {NoStop}%
\bibitem [{\citenamefont {Mendon{\c c}a}\ \emph {et~al.}(2014)\citenamefont
  {Mendon{\c c}a}, \citenamefont {Marchiolli},\ and\ \citenamefont
  {Galetti}}]{Men14}%
  \BibitemOpen
  \bibfield  {author} {\bibinfo {author} {\bibfnamefont {P.~E.}\ \bibnamefont
  {Mendon{\c c}a}}, \bibinfo {author} {\bibfnamefont {M.~A.}\ \bibnamefont
  {Marchiolli}}, \ and\ \bibinfo {author} {\bibfnamefont {D.}~\bibnamefont
  {Galetti}},\ }\href {\doibase https://doi.org/10.1016/j.aop.2014.08.017}
  {\bibfield  {journal} {\bibinfo  {journal} {Annals of Physics}\ }\textbf
  {\bibinfo {volume} {351}},\ \bibinfo {pages} {79 } (\bibinfo {year}
  {2014})}\BibitemShut {NoStop}%
\bibitem [{\citenamefont {Gradshteyn}\ and\ \citenamefont
  {Ryzhik}(2014)}]{Gra14}%
  \BibitemOpen
  \bibfield  {author} {\bibinfo {author} {\bibfnamefont {I.~S.}\ \bibnamefont
  {Gradshteyn}}\ and\ \bibinfo {author} {\bibfnamefont {I.~M.}\ \bibnamefont
  {Ryzhik}},\ }\href@noop {} {\emph {\bibinfo {title} {Table of integrals,
  series, and products}}}\ (\bibinfo  {publisher} {Academic press},\ \bibinfo
  {year} {2014})\BibitemShut {NoStop}%
\end{thebibliography}%
\bibliographystyle{apsrev4-1}

\appendix

\end{document}